\documentclass[a4paper,10pt]{amsart}
\usepackage{amssymb}
\usepackage{amsthm}
\usepackage{graphicx}

\newtheorem{theorem}{Theorem}
\newtheorem{lemma}{Lemma}
\newtheorem{corollary}[theorem]{Corollary}

\def\R{\mathbb{R}}

\def\set#1{\left\{\, #1 \,\right\}}
\def\abs #1{\left| \,#1\, \right|}

\def\S{\mathbb{S}}

\def\calF{\mathcal{F}}

\def\calM{\mathcal{M}}

\begin{document}

\title [Minimal Moulton configurations]
{Generic uniqueness of the minimal\\
Moulton central configuration}

\author[R. Iturriaga]{Renato Iturriaga}
 \address{ CIMAT\\
  A.P. 402, 3600\\
  Guanajuato, Gto. M\'{e}xico.}
 \email{renato@cimat.mx}
 \thanks{Both authors thank ANR project WKBHJ
 for his support during the final part of the work.}
\author[E. Maderna]{Ezequiel Maderna}
 \address{Centro de Matem\'{e}tica\\
    Universidad de la Rep\'{u}blica, Uruguay.}
 \email{emaderna@cmat.edu.uy}

\subjclass[2010]{70F10}
\keywords{N-body problem, central configuration, genericity.}

\begin{abstract}
We prove that, for generic (open and dense) values of the masses,
the Newtonian potential function of the collinear N-body
problem has $N!/2$ critical values when restricted to a
fixed inertia level. In particular, we prove that for generic values
of the masses, there is only one global minimal Moulton configuration.
\end{abstract}

\maketitle

\begin{center}
\today
\end{center}

\section{Introduction}

In the N-body problem there is a family of solutions
that conserve the shape in the evolution in time.
Among these motions, those with zero angular momentum are
called homothetic motions.
They have the form \[x(t)=(r_1(t),\dots,r_N(t))=\phi(t)\, x_0\]
where $\phi (t)>0$ is a
solution of a one center problem in the line $\R_+$,
and $x_0$ a central configuration.
This kind of configurations can be defined in many equivalent ways,
say for instance as the critical points of the restrictions
of the potential function
\[U(x)=\sum_{i<j}\frac{m_i\,m_j}{r_{ij}}\,,\]
to the level sets of the moment of inertia
\[I(x)=\sum_{i=1}^N m_i\,r_i^2\;.\]
It turns out that these configurations have \textsl{center of mass at the origin}.
For some authors there is an extended notion of central configuration with
respect to its center of mass. These are only translation of the first ones.

In this paper we will be interested in the collinear
N-body problem, therefore a configuration
$x=(r_1,r_2,\dots,r_N)\in\R^N$ will represent the vector
of positions of the bodies, which are supposed to be point particles,
each with mass $m_i>0$, and contained in a straight line.
As usual, $r_{ij}=\abs{r_i-r_j}$ will denote the distance
between the bodies $r_i$ and $r_j$.

When the bodies evolve in a space of dimension $k>1$
not much is know about the geometry of central configurations.
Not even know in general if there exist only a finite number
-- modulo similitude  -- of central configurations.
One of the most recent works on this topic, due to
Albouy and Kaloshin \cite{AK}, shows the generic
finiteness in the case of five bodies in the plane,
that is, excluding the situation in which the vector
of masses $m=(m_1,\dots,m_5)$
belongs to a given subvariety of $\R_+^5$.

In contrast, for dimension $k=1$,  the problem is solved. The first step was given
by Euler who solved the case of three bodies see \cite{Eu}.
Moulton solved the problem for arbitrary number of masses.
More precisely, he proved in \cite{Mou} that if we identify
configurations which are homothetic by a positive factor,
then there are exactly $N!$ equivalence classes of critical
points, each one corresponding to an order $\sigma\in S_N$
 of the bodies in the line.
As we will explain, they are all nondegenerate local minima.
See also the appendix on Moulton's theorem in the paper by
Smale \cite{S}.

The mass vector $m=(m_1,\dots,m_N)\in\R_+^N$
is a parameter which determines the potential function $U$ and
the moment of inertia $I$. Thus the mass vector also determines
the central configurations. Before stating our result, let us
recall some well known equivalent definitions of central configuration.
Once the mass vector is fixed, we say that a configuration
$x_0\in\R^N$  without collisions (that is, such that
$U(x_0)<+\infty$) is a \textsl{central configuration} if and only if
one of the following equivalent conditions is satisfied:
\begin{itemize}
 \item [(a)] $x_0$ is a critical point of $U_0$, the restriction of
 $U$ to the level set of $I$ which contains $x_0$.
 \item [(b)] $x_0$ is a critical point of the homogeneous function
 (of zero degree) \[\tilde U=U\,I^{1/2}\,.\]
 \item [(c)] $x_0$ is a critical point of the function $U+\lambda I$
 for some value of $\lambda>0$.
\end{itemize}
Note that if $x_0$ is a central configuration then $r\,x_0$ is also
a central configuration for every $r\neq 0$.
Moreover, the notion of \textsl{nondegenerate} central configuration
refers to the first condition. More precisely, if $I(x_0)=k$ then $x_0$
is a nondegenerate central configuration when $x_0$ is a nondegenerate
critical point of the restriction of $U$ to the ellipsoid
$S_k=\set{x\mid I(x)=k}$.

The main result of the present note is the following theorem
and his corollary.

\begin{theorem}\label{minMoulton}
There is an open and dense set of mass vectors
$A\subset\R_+^N$ such that, if $m\in A$
then the function $\tilde U$ has $N!/2$ critical values.
\end{theorem}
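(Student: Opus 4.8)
The plan is to combine Moulton's count with the reflection symmetry $x\mapsto -x$ to get the upper bound $N!/2$, and then to show that the $N!/2$ putative critical values depend real-analytically on $m$ and that no two of them coincide identically, so that the coincidence locus is a proper analytic (hence nowhere dense) subset of $\R_+^N$.

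\smallskip\noindent\emph{Upper bound and analyticity.} For $\sigma\in S_N$ let $C_\sigma\subset\R^N$ be the open cone where $r_{\sigma(1)}<\dots<r_{\sigma(N)}$. By Moulton's theorem, for each $m$ the function $\tilde U(\cdot,m)$ restricted to $C_\sigma$ has, modulo homothety, a unique critical point, nondegenerate in the sense of (a); normalizing it to lie on $S_1$ produces a point $x_\sigma(m)\in C_\sigma\cap S_1$ that is a nondegenerate critical point of $U|_{S_1}$. The implicit function theorem then makes $m\mapsto x_\sigma(m)$ real-analytic on $\R_+^N$, hence so is
\[
c_\sigma(m):=\tilde U\big(x_\sigma(m),m\big)=U\big(x_\sigma(m),m\big).
\]
The involution $x\mapsto -x$ preserves $U$ and $I$ and maps $C_\sigma$ onto $C_{\bar\sigma}$, where $\bar\sigma$ is the reversed order, so $c_\sigma\equiv c_{\bar\sigma}$; since no order equals its reverse when $N\ge 2$, reversal is fixed-point free on $S_N$ and $\tilde U$ has at most $N!/2$ critical values. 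Each difference $c_\sigma-c_\tau$ is real-analytic on the connected open set $\R_+^N$, so its zero set $Z_{\sigma\tau}$ is either all of $\R_+^N$ or nowhere dense. Put $A=\R_+^N\setminus\bigcup Z_{\sigma\tau}$, the union over the finitely many pairs with $\{\sigma,\bar\sigma\}\ne\{\tau,\bar\tau\}$; on $A$ the values $c_\sigma(m)$ fall into exactly $N!/2$ distinct classes, so $\tilde U$ has exactly $N!/2$ critical values there, and $A$ is open and dense once we know every such $Z_{\sigma\tau}$ is nowhere dense. Thus the theorem reduces to the \emph{separation claim}: whenever $\{\sigma,\bar\sigma\}\ne\{\tau,\bar\tau\}$, there is some mass vector $m$ with $c_\sigma(m)\ne c_\tau(m)$.

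\smallskip\noindent\emph{Separation by induction on $N$ via a vanishing mass.} I would prove the separation claim by induction on $N$ (the case $N=2$ being vacuous, as the only pair is $\{\sigma,\bar\sigma\}$). Write $\sigma',\tau'\in S_{N-1}$ for the orders induced on the bodies $1,\dots,N-1$. The device is to let $m_N\to 0^+$ with $m'=(m_1,\dots,m_{N-1})$ fixed. Since the body $N$ contributes $O(m_N)$ to both $U$ and $I$, one checks that the minimizer in $C_\sigma$, suitably normalized, converges to the $(N-1)$-body Moulton configuration of class $\sigma'$ for the masses $m'$, with positions $r_1,\dots,r_{N-1}$, together with the body $N$ sitting at the unique minimum --- in the slot prescribed by $\sigma$ --- of the strictly convex and proper function
\[
g(r)=\sum_{l<N}\frac{m_l}{|r-r_l|}+\lambda_0\,r^2,\qquad \lambda_0=\frac{U_0}{2I_0},
\]
where $U_0,I_0$ are the potential and inertia of that $(N-1)$-body configuration and $\lambda_0$ its multiplier in the sense of (c). By the envelope theorem this yields, as $m_N\to 0$,
\[
c_\sigma(m',m_N)=c^{(N-1)}_{\sigma'}(m')+m_N\,I_0^{1/2}\,g(p_\sigma)+O(m_N^2),
\]
where $c^{(N-1)}_{\sigma'}$ is the analogous critical value for the $(N-1)$-body problem and $p_\sigma$ is the slot-minimum of $g$ selected by $\sigma$.

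\smallskip\noindent\emph{The two cases and the main obstacle.} If $\{\sigma',\bar{\sigma'}\}\ne\{\tau',\bar{\tau'}\}$, the inductive hypothesis gives $m'$ with $c^{(N-1)}_{\sigma'}(m')\ne c^{(N-1)}_{\tau'}(m')$, and then $c_\sigma(m',m_N)\ne c_\tau(m',m_N)$ for all small enough $m_N>0$. Otherwise, replacing $\tau$ by $\bar\tau$ if necessary --- which alters neither its class nor its critical value and keeps it outside $\{\sigma,\bar\sigma\}$ --- we may assume $\sigma'=\tau'$; then $\sigma\ne\tau$ forces the body $N$ into two distinct slots, so $p_\sigma\ne p_\tau$, and it remains to pick $m'$ so that these two slot-minima of $g$ take different values. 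That is an elementary genericity statement inside the $(N-1)$-body problem: the slot-minima are real-analytic in $m'$ and are readily seen not to coincide identically (e.g. for an $m'$ whose Moulton configuration is strongly asymmetric, so that the two slots have very different widths). With such an $m'$ the first-order terms of $c_\sigma$ and $c_\tau$ at $m_N=0$ differ, hence $c_\sigma\not\equiv c_\tau$, which closes the induction. The step I expect to require genuine work is the vanishing-mass asymptotics --- showing that the constrained minimizer over $C_\sigma$ neither escapes to infinity nor collapses onto $\partial C_\sigma$, but converges to the stated limit with the stated first-order expansion, uniformly enough to draw the conclusion; the remaining ingredients (Moulton's nondegeneracy, analyticity via the implicit function theorem, and the convexity of $g$ on each slot) are routine.
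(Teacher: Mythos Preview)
Your high-level architecture agrees with the paper: use Moulton's nondegeneracy and the real-analytic implicit function theorem to make each $c_\sigma$ analytic in $m$, note the reflection symmetry $c_\sigma\equiv c_{\bar\sigma}$, and reduce the theorem to a \emph{separation claim}. The difference lies in how separation is obtained. The paper does not induct on $N$; instead it proves a combinatorial lemma: if $\{\sigma,\bar\sigma\}\neq\{\tau,\bar\tau\}$ then there exist three bodies $i,j,k$ whose relative order (which of the three sits between the other two) differs in $\sigma$ and in $\tau$. Sending \emph{all} the other $N-3$ masses to zero simultaneously, the paper shows $c_\sigma(m)\to\calM_3(\rho,(m_i,m_j,m_k))$ and $c_\tau(m)\to\calM_3(\rho',(m_i,m_j,m_k))$ where $\rho,\rho'$ are two \emph{non-reverse} orderings of three bodies. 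A single explicit computation in the Euler three-body problem (for the mass vector $(1,9,1)$, via Euler's quintic) then shows these two limits can be made distinct. Thus the paper needs only the zeroth-order vanishing-mass limit, never a first-order expansion.

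Your inductive route is plausible, but the work is concentrated in your ``second case'' $\sigma'=\tau'$, and you have underestimated it. First, the $N=3$ step of your induction \emph{always} falls into this case: removing any one body from two non-reverse orderings of three bodies yields orderings of two bodies, which are automatically reverses of each other. So the first-order expansion and the slot-minima comparison are unavoidable in your scheme. Second, the claim that two distinct slot-minima of $g$ take different values for some $m'$ is not quite ``elementary'': it is another analytic-separation statement one level down, and it still needs a concrete verification somewhere (for $N=3$ this is an explicit two-body computation comparable in spirit to the paper's three-body lemma). Your heuristic about ``strongly asymmetric'' configurations is suggestive but not a proof; note for instance that for equal masses the two outer slot-minima coincide by symmetry, so an actual computation or asymptotic argument is required. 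In short, your approach can almost certainly be completed, but what you label routine is exactly where a genuine calculation must occur, and the paper's combinatorial reduction to three bodies is precisely the device that eliminates the need for your first-order layer.
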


\begin{corollary}
There is an open and dense set of mass vectors for which
the collinear $N$-body problem has only one global minimal
configuration.
\end{corollary}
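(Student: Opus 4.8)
\medskip
\noindent\textbf{Proof strategy.}
The plan is to reduce Theorem~\ref{minMoulton} to an existential statement about mass vectors and then settle that statement by induction on $N$; the corollary will drop out at the end. For an order $\sigma\in S_N$, let $c_\sigma(m)$ be the value of $\tilde U$ at the Moulton critical configuration of order $\sigma$; this is well defined, since up to homothety that configuration is unique and $\tilde U$ is homogeneous of degree $0$, and, because by Moulton's theorem the critical point is nondegenerate, the implicit function theorem makes $m\mapsto c_\sigma(m)$ real analytic on $\R_+^N$. The reflection $x\mapsto-x$ preserves $\tilde U$ and reverses the order, so $c_\sigma\equiv c_{\bar\sigma}$; hence $\tilde U$ always has at most $N!/2$ critical values, and the theorem asserts that generically there is no further coincidence. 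The set $A$ is the intersection, over the finitely many pairs with $\{\sigma,\bar\sigma\}\neq\{\tau,\bar\tau\}$, of the open sets $\set{m\mid c_\sigma(m)\neq c_\tau(m)}$, so $A$ is open; for its density it is enough, since each $c_\sigma-c_\tau$ is real analytic and $\R_+^N$ is connected, to exhibit for every such pair one mass vector at which $c_\sigma\neq c_\tau$ --- then $\set{c_\sigma=c_\tau}$ is a proper analytic subset, hence closed and nowhere dense, and $A$ is open and dense.

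I would build the separating mass vector by induction on $N$, the case $N=2$ being trivial since there $N!/2=1$. Give body $N$ the mass $m_N=\varepsilon$, to be let tend to $0$, keep the remaining masses $m'=(m_1,\dots,m_{N-1})$ fixed, and write the Moulton configuration of order $\sigma$ on $\set{I=1}$ as $(y,r_N)$. As $\varepsilon\to0$, $y$ converges to $x_0'$, the Moulton configuration of the $(N-1)$-body problem for the order $\sigma'$ that $\sigma$ induces on $\set{1,\dots,N-1}$, while $r_N$ converges to the unique critical point --- lying in the slot prescribed by $\sigma$, a slot being one of the $N$ connected components of $\R$ with the coordinates of $x_0'$ removed --- of the one-variable function
\[
g(r)=\sum_{i=1}^{N-1}\frac{m_i}{\abs{x'_{0,i}-r}}+\lambda_0\,r^2,
\]
where $\lambda_0>0$ is the Lagrange multiplier of the Moulton configuration $x_0'$. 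On each slot $g$ is strictly convex, since $g''(r)=\sum_i 2m_i\abs{x'_{0,i}-r}^{-3}+2\lambda_0>0$, and proper, so there it has exactly one critical point, the $N$ of them matching the $N$ orders that extend $\sigma'$. A first-order computation then gives
\[
c_\sigma(m)=c^{(N-1)}_{\sigma'}(m')+\varepsilon\,g\bigl(r^*(\sigma)\bigr)+O(\varepsilon^2),
\]
with $r^*(\sigma)$ the critical point of $g$ in the slot of $\sigma$: the term $\varepsilon\sum_i m_i\abs{x'_{0,i}-r^*(\sigma)}^{-1}$ is the interaction of the light body with the others, the term $\varepsilon\lambda_0(r^*(\sigma))^2$ comes from the rescaling of the heavy bodies needed to restore $I=1$, and the displacement of the heavy bodies contributes only at order $\varepsilon^2$ because $x_0'$ is already critical.

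The induction then closes in two cases. If $\sigma$ and $\tau$ induce different orders on $\set{1,\dots,N-1}$, even up to reversal, choose $m'$ in the open dense set furnished by the inductive hypothesis; then already the leading terms $c^{(N-1)}_{\sigma'}(m')$ and $c^{(N-1)}_{\tau'}(m')$ differ, whence $c_\sigma(m)\neq c_\tau(m)$ for all small $\varepsilon>0$. Otherwise, after relabelling the bodies $1,\dots,N-1$ and possibly replacing $\tau$ by $\bar\tau$, both $\sigma$ and $\tau$ induce the identity order, and since $\sigma\neq\tau$ body $N$ occupies two distinct slots; one is reduced to finding $m'$ for which $g$ takes different values at the two corresponding (strictly convex) minima. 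Again this is a real analytic condition on $m'$, so it suffices to verify it at a single $m'$; this is done by an elementary estimate, e.g.\ by taking $m_1,\dots,m_{N-1}$ of widely separated magnitudes, so that the minimum of $g$ next to a very light body and the minimum next to a very heavy body are of different orders of magnitude. I expect this last estimate to be the only genuinely technical point; everything before it is analyticity, the Baire category argument, and the bookkeeping of orders and slots.

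Finally, the corollary is immediate. On $\set{I=1}$ the function $\tilde U$ agrees with $U$, which tends to $+\infty$ near collisions, so it attains its minimum there, necessarily at a critical point; for $m\in A$ the $N!/2$ critical values are pairwise distinct, so exactly one pair $\set{\sigma_0,\bar\sigma_0}$ realizes the least of them. By Moulton's theorem every Moulton critical point is a nondegenerate local minimum, so this pair is precisely the set of global minimizers of $\tilde U$; that is, for generic masses the collinear $N$-body problem has a unique global minimal configuration, up to homothety, translation, and reflection.
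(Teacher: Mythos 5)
Your final paragraph --- the deduction of the corollary from Theorem~\ref{minMoulton} --- is correct and is exactly the argument the paper intends (the paper offers no separate proof of the corollary): on the compact ellipsoid $\S_m$ the potential attains its minimum away from collisions, necessarily at one of the $N!$ Moulton points, and when the $N!/2$ critical values are pairwise distinct the minimum is realized by a single pair $\set{x_{\sigma_0},-x_{\sigma_0}}$. The issue is that you have also undertaken to reprove the theorem itself, by a route genuinely different from the paper's, and that route has a gap at its decisive step. The paper does not induct on $N$: it uses a combinatorial lemma (any two permutations that are neither equal nor reversed disagree on the betweenness of some triple $i,j,k$) to send \emph{all but three} masses to zero at once, reducing every separation problem to a single explicit three-body instance, which is then settled by Euler's quintic: for masses $(1,\mu,1)$ with $\mu=9$ the two relevant critical values are $37/\sqrt{2}$ and $\frac{51}{6}(118/11)^{1/2}$, which differ. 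One concrete computation serves all $N$ and all pairs of permutations.

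In your induction, by contrast, the analogous quantitative input is needed afresh at every level: this is your case (b), where $\sigma$ and $\tau$ induce the same order on bodies $1,\dots,N-1$ and you must show that the minima of $g(r)=\sum_i m_i\,\abs{x'_{0,i}-r}^{-1}+\lambda_0 r^2$ over two distinct slots of the $(N-1)$-body Moulton configuration can be given different values by some choice of $m'$. This is not covered by the inductive hypothesis (which concerns critical values of the $(N-1)$-body problem, not the auxiliary function $g$), and it is precisely the hard, non-formal part of the whole theorem; "take masses of widely separated magnitudes" is not an argument, since $x_0'$, the slot widths, and $\lambda_0=U_{m'}(x_0')/2$ all rearrange as the masses are scaled, and for mirror-symmetric pairs of slots the leading asymptotics can coincide. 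If you try to discharge case (b) by further sending intermediate masses to zero, you bottom out at a two-heavy-plus-one-light three-body computation --- i.e.\ you are forced back to the paper's Lemma~\ref{L3BP} in any case. So either import that explicit computation (at which point the paper's non-inductive organization via Lemma~\ref{combinatorio} is simpler), or actually carry out the slot-separation estimate for general $N$; as written, the claim is asserted, not proved. A smaller point: your first-order expansion $c_\sigma(m)=c^{(N-1)}_{\sigma'}(m')+\epsilon\, g(r^*(\sigma))+O(\epsilon^2)$ needs justification beyond the paper's Lemma~\ref{aproximacion}, which only gives convergence of the values (zeroth order); the clean way is the implicit function theorem in $\epsilon$ at $\epsilon=0$ applied to the rescaled critical-point equations, which is fine but must be said.
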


Of course, the uniqueness in the statement of the corollary
refers to the similarity classes of central configurations,
that is to say, once we identify configurations
which are homothetic by a non zero factor.
Thus there are $N!/2$ different central configurations in this sense,
and generically only one of them is minimal.

In contrast, the number of critical values can be less than $N!/2$
for some values of the mass vector.
It is clear that if two masses are equal, and $N>3$,
then commutation of the corresponding bodies
gives an extra symmetry of the problem which is not
induced by an spacial isometry. In that case
it is also clear that we must have at least two
non similar minimal configurations, and at most
$N!/4$ critical values of the potential function
restricted to any inertia level.
If all the masses
are equal, the action of the full symmetric group
preserves the set of central configurations, which in turn
implies that the restriction of the potential function
to any inertia level has only one critical value, that is,
the potential takes the same value at every normalized
central configuration.

Before beginning the proof of the theorem, let us explain
our special interest in minimal central configurations.
They appear repeatedly in the recent literature on the
general $N$-body problem. More precisely, the minimality
condition is often necessary to apply global variational methods.
Indeed, in \cite{MV} the second author and Venturelli
have proved that if $\alpha$ is a given minimal
configuration normalized in the sense that $I(\alpha)=1$,
then for any configuration $x_0$ there is at least one
motion $x(t)$ starting from $x_0$ which is completely parabolic
for $t\to +\infty$, and whose normalized configuration
$x(t)\,I(x(t))^{-1/2}$ converges to $\alpha$. More recently,
Percino and S\'{a}nchez-Morgado \cite{PSM} built the Busemann functions
associated to each minimal configuration. This last result improves the
previous one, because each of these functions is provided with a lamination
of completely parabolic motions which are asymptotic to the minimal configuration.

In higher dimensions, as we already said, very little is known about the number of minimal central
configurations modulo similitude. However, at the risk of being bold and naive, it seems  natural to expect that generically
in the masses there should be only one minimal configuration. This is true for instance
when the dimension of the Euclidean space in which
the bodies move is $k\geq 2$ and the number of bodies $N$ does not exceeds
$k+1$. In this case we have, \textsl{for any choice of the masses}, only one minimal configuration
in which all the mutual distances $r_{ij}$ are equal. The main result in this work
shows that this is also true for arbitrary number of bodies and generic masses
in the collinear case. 

The proof of theorem \ref{minMoulton} is divided in several lemmas
which shall be established in the next section. The first two are given
for the sake of completeness even if they are well known. More precisely, 
these two lemmas contain a proof of Moulton's theorem which includes the
analytic dependence on the mass vector.

\section{Proof}

We begin by recalling a very useful and well known
way to normalize central configurations which was proposed
by Yoccoz at a conference in Palaiseau (\cite{Yoccoz}).
It is clear that $z\in\Omega$ is a central configuration if and only if
there exits $\lambda\in\R$ such that
\[\nabla U(z)+\lambda\,\nabla I(z)=0\,.\]
Since the functions $U$ and $I$ are homogeneous of degree
$-1$ and $2$ respectively, we deduce that
\begin{eqnarray*}
0&=&\left<\nabla U(z),z\right>+\lambda\,\left<\nabla I(z),z\right>\\
&=&-U(z)+2\lambda\,I(z)
\end{eqnarray*}
hence that $\lambda=U(z)/2I(z)$. We also see that $z$
is a central configuration if and only if $\mu z$
is a central configuration for all $\mu>0$, and that $\lambda(\mu z)=\mu^{-3}\lambda(z)$.
Therefore we conclude that there are two natural ways to normalize
the size of a central configuration: fixing the value of the moment of inertia, or
fixing the value of $\lambda$. The advantage of the second one is that
the normalized configuration is a critical point of the function
$U+\lambda\,I$ in the open set $\Omega$ rather than a critical point
of the restriction of $U$ to some level set of the moment of inertia.

Our first lemma proves the uniqueness and the analytical dependence
on the masses, of the normal central configuration once fixed the
ordering of the bodies.
Let us introduce before some convenient notation.

First, since we will consider varying masses, it will be convenient
to use the notation $U_m(x)=U(x,m)$ and $I_m(x)=I(x,m)$ for the values
at $x$ of the potential function and the moment of inertia respect
to the origin respectively.
Note that both functions $U$ and $I$ are real analytic
functions in $\Omega\times\R_+^N$.

Finally, as usual, $S_N$ will denote the group of bijections
of the set $\set{1,\dots,N}$ into itself. Each element of $S_N$
is therefore identified with an ordering of the $N$
bodies in the oriented straight line.
For $\sigma\in S_N$ we define the open set $\Omega_\sigma$ as the set
of configurations of $N$ bodies in the oriented line with the ordering
prescribed by $\sigma$, that is to say,
\[\Omega_\sigma=
\set{x=(r_1,\dots,r_n)\mid r_{\sigma(1)}<\dots<r_{\sigma(N)}}\,.\]
In other words, $\sigma(i)=j$ means that the mass $j$ occupies the place $i$
from left to right.
It is clear that the set $\Omega\subset\R^N$ of configurations without
collisions is the disjoint union of the above sets. Thus $\Omega$
has $N!$ connected components.

\begin{lemma}[Moulton's theorem] \label{Moulton-analitico}
For each $\sigma\in S_N$ there is a real analytic function
\[x_\sigma:\R_+^N\to \Omega_\sigma\]
such that $x_\sigma(m)$ is the unique central configuration
in $\Omega_\sigma$ for the collinear $N$-body problem with
mass vector $m$ such that $I_m(x_\sigma(m))=1$.
\end{lemma}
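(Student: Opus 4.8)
The plan is to work with the normalization proposed by Yoccoz, fixing the Lagrange multiplier rather than the inertia, and then rescale at the end. Fix $\sigma \in S_N$; after relabelling we may assume $\sigma = \mathrm{id}$, so $\Omega_\sigma = \{r_1 < r_2 < \dots < r_N\}$. First I would restrict attention to the affine subspace of configurations with centre of mass at the origin, $\{\sum m_i r_i = 0\}$, which is natural since the multiplier $\lambda$ couples $U$ and $I$ and all central configurations have this property (as recalled in the introduction). On $\Omega_\sigma$ the function $U_m$ is real analytic, strictly convex along rays, and tends to $+\infty$ both on the boundary (where some $r_{ij} \to 0$) and at infinity on the inertia ellipsoid; restricted to a fixed level set $\{I_m = 1\}$ inside the hyperplane it therefore attains a minimum at an interior point. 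This gives existence of at least one central configuration in $\Omega_\sigma$.

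For uniqueness the key point is a convexity argument. Following Moulton, I would fix $\lambda > 0$ and consider $F_{m,\lambda} = U_m + \lambda I_m$ on the open cone $\Omega_\sigma \cap \{\sum m_i r_i = 0\}$. Both $U_m$ (as a sum of terms $m_i m_j / (r_j - r_i)$ with $r_j - r_i > 0$, each convex in the differences) and $I_m$ are strictly convex on this convex set, so $F_{m,\lambda}$ is strictly convex, proper, and bounded below; hence it has a unique critical point, which is its unique minimum. By homogeneity (the scaling $z \mapsto \mu z$, $\lambda \mapsto \mu^{-3}\lambda$ recalled above) the choice of $\lambda$ is immaterial up to rescaling, so $\Omega_\sigma$ contains exactly one central configuration with any prescribed value of $I_m$, in particular exactly one with $I_m = 1$. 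Call it $x_\sigma(m)$.

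Finally, for the analytic dependence on $m$ I would apply the implicit function theorem to the critical-point equation $\nabla F_{m,\lambda} = 0$ (gradient taken within the hyperplane) at $\lambda = $ some fixed value, say $\lambda = 1$: the differential with respect to $x$ is the Hessian of $F_{m,1}$ at the minimum, which by the strict convexity just established is positive definite, hence invertible. Since $U$ and $I$ are real analytic on $\Omega \times \R_+^N$, the analytic implicit function theorem yields a real analytic map $m \mapsto \hat x_\sigma(m)$ solving $\nabla F_{m,1} = 0$ in $\Omega_\sigma$; composing with the analytic rescaling $x \mapsto I_m(x)^{-1/2} x$ (which is analytic and positive-valued on $\Omega$) produces the desired analytic $x_\sigma : \R_+^N \to \Omega_\sigma$ with $I_m(x_\sigma(m)) = 1$. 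The main obstacle, and the step deserving the most care, is verifying the strict convexity and properness of $F_{m,\lambda}$ on the correct domain — in particular checking that after passing to the centre-of-mass hyperplane the Hessian is genuinely positive definite (not merely semidefinite), which is exactly what is needed both for uniqueness and for the nondegeneracy that feeds the implicit function theorem; everything else is a routine application of standard analytic machinery.
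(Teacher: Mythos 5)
Your proposal is correct and follows essentially the same route as the paper: properness and strict convexity of $U_m+\lambda I_m$ on the convex component $\Omega_\sigma$, a unique critical point, the real analytic implicit function theorem at that nondegenerate critical point, and a final rescaling to reach $I_m=1$. The only real difference is your restriction to the centre-of-mass hyperplane, which is unnecessary (and slightly awkward, since that hyperplane depends on $m$): the term $\lambda I_m$ already makes the Hessian of $U_m+\lambda I_m$ positive definite on all of $\R^N$, with spectrum bounded below by $2\min_i m_i$, so the paper works directly in $\R^N$ and the delicate point you flag dissolves.
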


\begin{proof}
We will prove that for any value of $m\in\R_+^N$ and any
$\sigma\in S_N$ the function \[W_m=U_m+I_m\] has a unique
critical point in $\Omega_\sigma$. Clearly $W_m$ is a proper function
over each convex set $\Omega_\sigma$. Indeed, for each $K>0$,
$U_m(x)\leq K$ implies that $x$ is in the closed set
\[\set{(r_1,\dots,r_N)\in\R^N\mid
K\,r_{ij}\geq m_i\,m_j >0\textrm{ for all } 1\leq i<j \leq N}\subset\Omega\,,\]
and $\set{x\mid I_m(x)\leq K}$ is a compact subset of $\R^N$.
On the other hand $W_m$ is strictly convex in $\Omega$.
A simple computation shows that
\[\frac{\partial^2W_m}{\partial r_i^2}(x)=
2m_i+\sum_{k\neq i}2\,m_i\,m_k\,r_{ik}^{-3}\;\;
\textrm{ and that }
\;\;\frac{\partial^2W_m}{\partial r_i\,\partial r_j}(x)=
-2\,m_i\,m_j\,r_{ij}^{-3}\] when $i\neq j$.
Thus given $x=(r_1,\dots,r_N)$ and $y=(s_1,\dots,s_N)$ we can write
\[\left<\,y,D^2W_m(x)\,y\,\right>=
2\,\sum_{i<j}m_i\,m_j\,r_{ij}^{-3}\,(s_i-s_j)^2\;+2\,I_m(y)\,,\]
which implies that the spectrum
of the Hessian matrix is
uniformly bounded from below by $2\,m_0$ where
$m_0=\min\set{m_1\dots,m_N}>0$.
The same conclusion can be obtained by application of the
Gershgorin circle theorem (see \cite{FeVa}).
Therefore we deduce that
the function $W_m$ has one and only one critical point
at each component $\Omega_\sigma$ of $\Omega$.
We will call $c_\sigma(m)$ this critical point.
We have that $c_\sigma(m)$ is the unique central configuration
in $\Omega_\sigma$ such that $\lambda_m(c_\sigma(m))=1$.

The map $c_\sigma:\R_+^N\to\Omega_\sigma$ is real analytic
because it is also defined by the real analytic implicit function theorem
(see for instance chapter 6 in \cite{KP}),
applied to the real analytic function
\[F_\sigma:\Omega_\sigma\times\R_+^N\to\R^N\]
given by
\[F_\sigma(x,m)=
\frac{\partial U}{\partial x}(x,m)+\frac{\partial I}{\partial x}(x,m)=
\nabla W_m(x)\,.\]
We know that the necessary condition to apply the implicit
function theorem is satisfied since
\[\frac{\partial F_\sigma}{\partial x}(x,m)=
D^2W_m(x)\]
is the Hessian matrix of the function $W_m$
and we already know that is positive definite at every point.

In order to finish the proof, we write as a function of $m$
the corresponding central configuration with unitary moment
of inertia. Indeed, since $\lambda_m(c_\sigma(m))=1$ we have
that
\[2\,I(c_\sigma(m),m)=U(c_\sigma(m),m)\,,\]
and therefore
\[x_\sigma(m)=\sqrt{2}\,c_\sigma(m)\,U(c_\sigma(m),m)^{-1/2}\]
defines a real analytic function which gives, for each
value of the mass vector $m$ the unique central configuration
in $\Omega_\sigma$ with moment of inertia equal to $1$.
\end{proof}

Now we will prove that the collinear central configurations, also called
Moulton configurations, are local minima of
$\tilde U_m=U_m\,I_m^{1/2}$. Note that $\tilde U_m(x)$ is the
value of the potential $U_m$ at the normalized configuration
$I_m(x)^{-1/2}x$. Moreover, if we call
\[\S_m=\set{x\in\R^N\mid I_m(x)=1}\]
then every central configurations in $\S_m$ is a nondegenerate
local minimum of the restriction $U_m\mid_{\S_m}$, and a
global minimum on each component $\Omega_\sigma\cap\S_m$. We give
the proof of this well known fact for the sake of completeness.
We will use the arguments in the proof of the previous lemma.

\begin{lemma}\label{nondegenerate-local-minima}
Given $m\in\R_+^N$ and $\sigma\in S_N$ let us
write $\Sigma=\Omega_\sigma\cap\S_m$ for the set of normal configurations
with order $\sigma$. The function $U_m\mid_{\,\Sigma}$ has a unique
global minimum which is nondegenerate.
\end{lemma}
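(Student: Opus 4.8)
The plan is to transfer the strict convexity of $W_m = U_m + I_m$ established in Lemma \ref{Moulton-analitico} to a statement about the restriction $U_m\mid_\Sigma$, using the normalization $\lambda_m(c_\sigma(m))=1$. First I would recall that on the inertia level set $\S_m$ the three normalizations of a central configuration coincide up to scaling: the unique critical point $c_\sigma(m)$ of $W_m$ in $\Omega_\sigma$, rescaled to have $I_m=1$, is precisely the point $x_\sigma(m)$ produced by Lemma \ref{Moulton-analitico}, and it is the only central configuration in $\Sigma$. So the content to prove is nondegeneracy and the global-minimum property.

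For nondegeneracy, I would argue at the point $p = x_\sigma(m) \in \Sigma$ directly. The tangent space $T_p\S_m$ is the $\nabla I_m(p)$-orthogonal complement; by a standard Lagrange-multiplier computation, for $v \in T_p\S_m$ one has $\mathrm{Hess}(U_m\mid_{\S_m})(p)[v,v] = \langle v, D^2U_m(p)v\rangle + \lambda\,\langle v, D^2I_m(p)v\rangle$ where $\lambda = \lambda_m(p)$ is the multiplier with $\nabla U_m(p) + \lambda\nabla I_m(p)=0$. Since $p$ has $\lambda_m(p)>0$ — indeed $\lambda_m = U_m/2I_m > 0$ always — this Hessian is the restriction to $T_p\S_m$ of the quadratic form $\langle v, D^2(U_m + \lambda I_m)(p)v\rangle$. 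Now $U_m + \lambda I_m$ is, like $W_m$, strictly convex on $\Omega$: the computation in Lemma \ref{Moulton-analitico} goes through verbatim with the coefficient of $I_m(y)$ changed from $2$ to $2\lambda$, giving $\langle y, D^2(U_m+\lambda I_m)(x)y\rangle = 2\sum_{i<j} m_i m_j r_{ij}^{-3}(s_i-s_j)^2 + 2\lambda\, I_m(y) > 0$ for $y \neq 0$. Hence the restricted Hessian is positive definite on $T_p\S_m$, so $p$ is a nondegenerate local minimum of $U_m\mid_{\S_m}$, a fortiori of $U_m\mid_\Sigma$.

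For the global minimum on $\Sigma = \Omega_\sigma \cap \S_m$ and uniqueness, I would use that $\Omega_\sigma$ is convex and that $U_m + \lambda I_m$ (for any fixed $\lambda>0$) is proper and strictly convex on $\Omega_\sigma$ by the same properness argument as in Lemma \ref{Moulton-analitico} (the sublevel sets of $U_m$ meet the sublevel sets of $I_m$ in a compact set inside $\Omega$). Strict convexity plus properness gives a unique global minimizer of $U_m + \lambda I_m$ on $\Omega_\sigma$; this minimizer is the critical point $c_\sigma(m)$. Rescaling by a positive factor $t$ sends $\Omega_\sigma$ to itself and $I_m$-level sets to $I_m$-level sets, and $U_m(tx) = t^{-1}U_m(x)$ while $I_m(tx) = t^2 I_m(x)$; so minimizing $U_m$ over $\Sigma$ is equivalent, after the substitution making $I_m=1$, to a one-parameter optimization that is pinned down by the critical point. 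Concretely: any critical point of $U_m\mid_\Sigma$ is a central configuration in $\Omega_\sigma$, hence a positive rescaling of $c_\sigma(m)$, hence equals $x_\sigma(m)$; so $U_m\mid_\Sigma$ has a unique critical point, and since it is proper on $\Sigma$ (as $\Sigma$ is a closed-in-$\S_m$ set on which $U_m$ is proper, its sublevel sets being compact) and bounded below, that unique critical point is the global minimum.

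The main obstacle, and the only place requiring care, is the bookkeeping in the second paragraph: making sure the Lagrange Hessian formula is applied with the correct sign and multiplier, and that the strict convexity of $U_m + \lambda I_m$ — not just of $W_m = U_m + I_m$ — is what is needed, which it is because the multiplier $\lambda_m(p)$ at a central configuration is strictly positive. Everything else (properness, convexity of $\Omega_\sigma$, the rescaling identities) is a direct reuse of what was already done in the proof of Lemma \ref{Moulton-analitico}.
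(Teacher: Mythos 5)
Your proof is correct, but it takes a genuinely different route from the paper. You establish nondegeneracy by the second-order Lagrange condition: at the critical point $p=x_\sigma(m)$ with multiplier $\lambda=\lambda_m(p)>0$, the Hessian of $U_m\mid_{\S_m}$ is the restriction to $T_p\S_m$ of $D^2(U_m+\lambda I_m)(p)$, which is positive definite by the same identity $\left<\,y,D^2(U_m+\lambda I_m)(x)\,y\,\right>=2\sum_{i<j}m_im_jr_{ij}^{-3}(s_i-s_j)^2+2\lambda\,I_m(y)$ used for $W_m$ in Lemma \ref{Moulton-analitico}; uniqueness and global minimality then follow from properness of $U_m$ on $\Sigma$ plus the fact that every critical point of $U_m\mid_\Sigma$ lies on the ray of $c_\sigma(m)$. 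The paper instead avoids any constrained second-order computation: it introduces the embedding $\varphi:\Sigma\to\Omega_\sigma$, $\varphi(x)=(U_m(x)/2)^{1/3}x$, which sends $x_\sigma$ to $c_\sigma$ and satisfies $U_m(x)=a\,W_m(\varphi(x))^{3/2}$, so that nondegeneracy and global minimality of $U_m\mid_\Sigma$ at $x_\sigma$ are read off directly from the nondegenerate global minimum of the strictly convex proper function $W_m$ at $c_\sigma$, via an embedding and a monotone reparametrization. Your argument is the more standard one and generalizes immediately to any constraint hypersurface; the paper's is slicker in that it reuses $W_m$ verbatim and never writes down a bordered Hessian. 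The only points in your write-up deserving a word of care are (i) the justification that $\{x\in\Sigma\mid U_m(x)\le K\}$ is compact, which holds because $\{U_m\le K\}$ is a closed subset of $\R^N$ contained in $\Omega$ and $\overline{\Omega_\sigma}\cap\Omega=\Omega_\sigma$, and (ii) the step identifying every central configuration in $\Omega_\sigma$ with a rescaling of $c_\sigma(m)$, which uses $\lambda(\mu z)=\mu^{-3}\lambda(z)$ together with the uniqueness of the critical point of $W_m$; both are easy and consistent with what Lemma \ref{Moulton-analitico} already provides.
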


\begin{proof}
We already know that $U_m\mid_\Sigma$ has a unique critical point, thus
we only have to prove that it is a nondegenerate minimum. The critical point
is the point $x_\sigma$ in the previous lemma, so we have
\[x_\sigma=\sqrt{2}\,c_\sigma\,U_m(c_\sigma)^{-1/2}\]
where $c_\sigma$ is the unique critical point of $W_m=U_m+I_m$ in $\Omega_\sigma$.
Now we consider the map $\varphi:\Sigma\to\Omega_\sigma$ given by
\[\varphi(x)=\left(\frac{U_m(x)}{2}\right)^{1/3}x\,.\]
Clearly, $\varphi$ is a smooth embedding
which satisfies $\varphi(x_\sigma)=c_\sigma$, as shown in figure 1.

\vspace{.5cm}
\begin{figure}[h]
\centering
\includegraphics{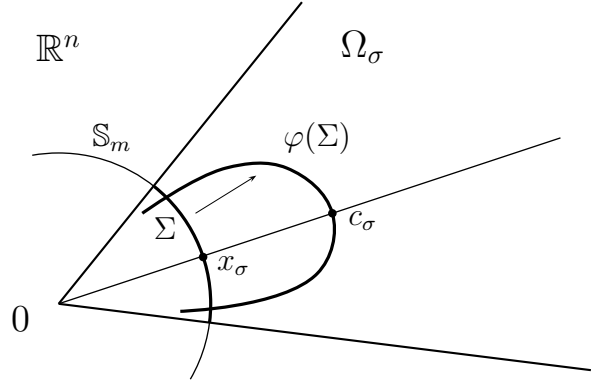}
\caption{The two different normalizations of a central configuartion.}
\end{figure}

\noindent
Moreover, for any $x\in\Sigma$ we have
\[U_m(\varphi(x))=2^{1/3}U_m(x)^{2/3}\;,\;\;\;
I_m(\varphi(x))=(1/4)U_m(x)^{2/3}\]
hence
\[U_m(x)=a\,W_m(\varphi(x))^{3/2}\]
for some constant $a>0$. This proves that $x_\sigma$ is a
nondegenerate minimum of $U_m\mid_\Sigma$ because $W_m$ has
a nondegenerate minimum at $c_\sigma=\varphi(x_\sigma)$ and
$W_m(c_\sigma)>0$.
\end{proof}

From now on, we will denote $\calM_N(\sigma,m)$ the
minimal value of the potential function $U_m$ restricted
to $\Sigma=\Omega_\sigma\cap\S_m$, the set of normal configurations
of $N$ bodies in the oriented line with a given order
prescribed by a permutation $\sigma\in S_N$. Thus we have
$\calM_N(\sigma,m)=U_m(x_\sigma)$.

We will say that $\sigma\in S_{N+k}$ is \emph{compatible} with
$\sigma_0\in S_N$ whenever for every
\[x=(r_1,\dots,r_N,r_{N+1},\dots,r_{N+k})
\in \Omega_\sigma\subset\R^{N+k}\]
we have
\[y=(r_1,\dots,r_N)\in\Omega_{\sigma_0}\subset\R^N\,.\]
Of course the condition can be written in terms of
$\sigma$ and $\sigma_0$ exclusively. More precisely,
taking into account that the value $\sigma(n)$ is the
number of the body in the $n$-th place from, it is easy to see
that $\sigma$ is compatible with $\sigma_0$ if and only if
the function
\[\sigma\mid_{\set{1,\dots,N}}\circ\;\sigma_0^{-1}:\set{1,\dots,N}\to
\set{1,\dots,N,N+1,\dots,N+k}\] is increasing.

\begin{lemma} \label{aproximacion}
Assume that $m_0=(m_1,\dots,m_N)\in\R_+^N$ and $\sigma\in S_N$
are given and that $\tau\in\S_{N+K}$ is compatible with $\sigma$.
If for $\epsilon>0$ we define the mass vector
\[m(\epsilon)=(m_1,\dots,m_N,
\epsilon\,m_{N+1},\dots,\epsilon\,m_{N+K})\in\R_+^{N+K}\]
then we have
\[\lim_{\epsilon\to 0}\calM_{N+K}(\tau,m(\epsilon))=
\calM_N(\sigma,m_0)\,.\]
\end{lemma}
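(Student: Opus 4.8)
The idea is that as $\epsilon\to 0$ the last $K$ bodies become negligible for both the potential and the moment of inertia, so the minimal configuration of order $\tau$ must degenerate to the minimal configuration of order $\sigma$ for the first $N$ bodies. Concretely, the plan is to establish the two–sided estimate
\[\calM_N(\sigma,m_0)\;\leq\;\calM_{N+K}(\tau,m(\epsilon))\;\leq\;\calM_N(\sigma,m_0)+C\,\epsilon\]
for all small $\epsilon>0$ with $C$ independent of $\epsilon$, and then let $\epsilon\to0$.

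First I would prove the left inequality, which in fact holds for every $\epsilon>0$. Let $z^{\epsilon}=(r_1,\dots,r_{N+K})\in\Omega_\tau$ be the minimal Moulton configuration for $m(\epsilon)$, so that $I_{m(\epsilon)}(z^{\epsilon})=1$ and $U_{m(\epsilon)}(z^{\epsilon})=\calM_{N+K}(\tau,m(\epsilon))$. Keeping only the nonnegative terms of the potential that involve two of the first $N$ bodies,
\[\calM_{N+K}(\tau,m(\epsilon))=U_{m(\epsilon)}(z^{\epsilon})\;\geq\;\sum_{1\leq i<j\leq N}\frac{m_i\,m_j}{r_{ij}}=U_{m_0}(w^{\epsilon}),\]
where $w^{\epsilon}=(r_1,\dots,r_N)\in\R^N$. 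The compatibility hypothesis is exactly what guarantees $w^{\epsilon}\in\Omega_\sigma$, and the definition of $m(\epsilon)$ gives $I_{m_0}(w^{\epsilon})=1-\epsilon\sum_{k=1}^{K} m_{N+k}\,r_{N+k}^2\in(0,1]$. Rescaling $w^{\epsilon}$ to unit moment of inertia and using that $U_{m_0}$ and $I_{m_0}$ are homogeneous of degrees $-1$ and $2$ respectively, together with $\calM_N(\sigma,m_0)>0$, we get $U_{m_0}(w^{\epsilon})\geq\calM_N(\sigma,m_0)$, hence the left bound.

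For the right inequality I would build a single test configuration out of $x_\sigma(m_0)$. Complete $x_\sigma(m_0)\in\Omega_\sigma\cap\S_{m_0}$ to a configuration $\bar x\in\Omega_\tau\subset\R^{N+K}$ by placing the bodies $N+1,\dots,N+K$ at fixed locations in the slots prescribed by $\tau$ relative to the first $N$ bodies (subdividing the relevant open intervals when several of them fall in the same slot, and using the two unbounded components of $\Omega_\sigma$ for those that $\tau$ places to the far left or far right), chosen so that $\bar x$ has no collisions. Since $\bar x$ is now fixed, $I_{m(\epsilon)}(\bar x)=1+O(\epsilon)$ and $U_{m(\epsilon)}(\bar x)=U_{m_0}(x_\sigma(m_0))+O(\epsilon)=\calM_N(\sigma,m_0)+O(\epsilon)$, where the $O(\epsilon)$ are controlled by constants depending only on $m_0$, the auxiliary masses and $\bar x$. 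Rescaling $\bar x$ by the factor $I_{m(\epsilon)}(\bar x)^{-1/2}=1+O(\epsilon)$ produces a configuration in $\Omega_\tau\cap\S_{m(\epsilon)}$ on which $U_{m(\epsilon)}$ equals $\calM_N(\sigma,m_0)+O(\epsilon)$; since $\calM_{N+K}(\tau,m(\epsilon))$ is by definition the minimum over this set, the right bound follows for small $\epsilon$.

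The only delicate point is the bookkeeping in the upper bound: one must check that ``compatible with $\sigma$'' really lets one both truncate any $z\in\Omega_\tau$ into a configuration of $\Omega_\sigma$ and extend $x_\sigma(m_0)$ to some $\bar x\in\Omega_\tau$, and that the error terms are genuinely $O(\epsilon)$ uniformly in $\epsilon$. Both become immediate once the auxiliary positions of $\bar x$ are frozen before letting $\epsilon\to0$; the remainder of the argument is just the homogeneity of $U$ and $I$.
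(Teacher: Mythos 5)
Your proposal is correct and follows essentially the same route as the paper: the lower bound by truncating the minimizer for $m(\epsilon)$, discarding the positive cross terms of the potential, and rescaling to unit inertia; the upper bound by extending $x_\sigma(m_0)$ with frozen auxiliary positions compatible with $\tau$ and using homogeneity. The only (harmless) difference is that you prove the lower bound for every $\epsilon>0$ directly, whereas the paper phrases it via a minimizing sequence.
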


\begin{proof}
Let $(\epsilon_n)_{n>0}$ be a minimizing sequence for
$\calM_{N+K}(\tau,m(\epsilon))$.
This means that $\epsilon_n\to 0$ and that
\[\liminf_{\epsilon\to 0}\calM_{N+K}(\tau,m(\epsilon))
=\lim_{n\to \infty}\calM_{N+K}(\tau,m(\epsilon_n))\,.\]
Now, for each $n>0$, we define $x_n\in\R^{N+K}$ as the unique normalized
central configuration of the $N+K$ bodies given by lemma \ref{Moulton-analitico},
for the mass vector $m(\epsilon_n)$ and the ordering given by $\tau$.
Thus, for each $n>0$ we have
\[I(x_n,m(\epsilon_n))=1\;\;\textrm{ and }\;\;
\calM_{N+K}(\tau,m(\epsilon_n))=U(x_n,m(\epsilon_n))\,,\]
where the last equality is due to lemma \ref{nondegenerate-local-minima}.
Moreover, if we write
\[x_n=(r_1^n,\dots,r_N^n,r_{N+1}^n,\dots,r_{N+K}^n)\,,
\;\;\textrm{ and }\;\;y_n=(r_1^n,\dots,r_N^n)\,,\]
then the compatibility of $\tau$ with $\sigma$ says that
the configuration $y_n$ has the ordering given
by the permutation $\sigma$.
The configuration $y_n$ is not normalized for the vector mass
$m_0$ as it is clear that $I(y_n,m_0)<1$. However, if we define
\[\alpha_n=m_{N+1}\,(r_{N+1}^n)^2+\dots+
m_{N+K}\,(r_{N+K}^n)^2\,,\]
we can write
\[I(y_n,m_0)=I(x_n,m(\epsilon_n))-\epsilon_n\,\alpha_n=
1-\epsilon_n\,\alpha_n\,\]
so the normalization of $y_n$ gives the configuration
$z_n= (1-\epsilon_n\,\alpha_n)^{-1/2}\,y_n$,
and we have
\[U(z_n,m_0)=(1-\epsilon_n\,\alpha_n)^{1/2}\;U(y_n,m_0)\,.\]
On the other hand, we have that
\[U(x_n,m(\epsilon_n))=U(y_n,m_0)+
\sum_{i=1}^N\sum_{j=N+1}^{N+K}
\frac{\epsilon_n\,m_i\,m_j}{\abs{r_i^n-r_j^n}}+
\sum_{N+1\leq i<j\leq N+K}
\frac{\epsilon_n^2\,m_i\,m_j}{\abs{r_i^n-r_j^n}}\,.\]
Since $(1-\epsilon_n\,\alpha_n)^{1/2}<1$, we deduce that
\[U(z_n,m_0)<U(y_n,m_0)<U(x_n,m(\epsilon_n))\,.\]
Thus, given that $\calM_N(\sigma,m_0)\leq U(z_n,m_0)$, we conclude that
\[\calM_N(\sigma,m_0)<U(x_n,m(\epsilon_n))
=\calM_{N+K}(\tau,m(\epsilon_n))\,.\]
Taking the limit for $n\to\infty$ we obtain the inequality
\[\calM_N(\sigma,m_0)\leq
\liminf_{\epsilon\to 0}\calM_{N+K}(\tau,m(\epsilon))\,.\]

We fix now $\delta>0$ and we define $z=(r_1,\dots,r_N)$ as
the unique normal central configuration for the mass vector $m_0$
and ordering prescribed by $\sigma$. In particular we have
$\calM_N(\sigma,m_0)=U(z,m_0)$ by lemma \ref{nondegenerate-local-minima}.
We will prove that the inequality
\[\calM_{N+K}(\tau,m(\epsilon))\leq U(z,m_0)+\delta\]
is satisfied whenever $\epsilon>0$ is small enough.
This will finish the proof, since it implies that
\[\limsup_{\epsilon\to 0}\calM_{N+K}(\tau,m(\epsilon))\leq
\calM_N(\sigma,m_0)\,.\]

Since $\tau$ is compatible with $\sigma$, we can add to the
configuration $z=(r_1,\dots,r_N)$ the positions of $K$ bodies,
in such a way that the ordering of the resulting extended configuration
$y=(r_1,\dots,r_N,r_{N+1},\dots,r_{N+K})$ is given by $\tau$.
We shall call $r_0$ the minimal distance between the positions
in the configuration $y$, that is to say,
\[r_0=\min\set{\abs{r_i-r_j}\mid 1\leq i<j\leq N+K}>0\,.\]
We will also consider the moment of inertia of the configuration
$y$ with respect to the mass vector $m(\epsilon)$, and we will denote
it by $I_\epsilon$. Thus we can write
\[I_\epsilon=I(y,m(\epsilon))=I(z,m_0)\,+\epsilon\,\alpha\,,\]
where
\[\alpha=m_{N+1}\,r_{N+1}^2+\dots+m_{N+K}\,r_{N+K}^2\,.\]
Moreover, since $z$ is a normal configuration for $m_0$, we can
write $I_\epsilon=1+\epsilon\,\alpha$.
Thus, normalizing
$y$ with respect to the mass vector $m(\epsilon)$
we obtain the configuration
\[x_\epsilon=I_\epsilon^{-1/2}y\,.\]

We observe now that the homogeneity gives
$U(x_\epsilon,m(\epsilon_n))=I_\epsilon^{1/2}\,U(y,m(\epsilon))$
and that
\[U(y,m(\epsilon_n))=
U(z,m_0)+
\sum_{i=1}^N\sum_{j=N+1}^{N+K}
\frac{\epsilon\,m_im_j}{\abs{r_i-r_j}}+
\sum_{N+1\leq i<j\leq N+K}
\frac{\epsilon^2\,m_im_j}{\abs{r_i-r_j}}\,.\]
Hence we deduce the upper bound
\[U(x_\epsilon,m(\epsilon))\leq
I_\epsilon^{1/2}\,
\left(U(z,m_0)+
N\,K\,\frac{\epsilon\,\mu^2}{r_0}+
\frac{K(K-1)}{2}\,\frac{\epsilon^2\,\mu^2}{r_0}\right)\,,\]
where $\mu=\max\set{m_1,\dots,m_{N+K}}$.
Therefore, since the right hand of the previous inequality
is a continuous function of $\epsilon$, and
$\calM_{N+K}(\tau,m(\epsilon))\leq U(x_\epsilon,m(\epsilon))$,
we conclude that there is $\epsilon_0>0$ such that
\[\calM_{N+K}(\tau,m(\epsilon))<U(z,m_0)+\delta\]
whenever $\epsilon<\epsilon_0$, as we wanted to prove.
\end{proof}

\begin{lemma} \label{L3BP}
There is $\mu>0$ for which
\[\calM_3(id,(1,\mu,1))\neq
\calM_3((2,3),(1,\mu,1))=\calM_3(id,(1,1,\mu))\,.\]
\end{lemma}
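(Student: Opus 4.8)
The plan is to split the statement into two essentially combinatorial identities and one finite, genuine inequality. For the first identity, observe that $\calM_3(\sigma,m)$ depends only on the sequence of masses read from left to right: setting $p_k=r_{\sigma(k)}$ and $\mu_k=m_{\sigma(k)}$, one has on $\Omega_\sigma$ that $U_m=\sum_{k<l}\mu_k\mu_l/(p_l-p_k)$ and $I_m=\sum_k\mu_k p_k^2$ with $p_1<\dots<p_N$, so the constrained minimum defining $\calM_3$ only sees $(\mu_1,\dots,\mu_N)$. Since the transposition $(2,3)$ sends $m=(1,\mu,1)$ to the left-to-right sequence $(m_1,m_3,m_2)=(1,1,\mu)$, which is precisely the left-to-right sequence of $(1,1,\mu)$ for the order $id$, this gives $\calM_3((2,3),(1,\mu,1))=\calM_3(id,(1,1,\mu))$ for every $\mu>0$; this takes care of the equality in the statement.

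Next I would compute $\calM_3(id,(1,\mu,1))$ explicitly. For $m=(1,\mu,1)$ the map $\Phi(r_1,r_2,r_3)=(-r_3,-r_2,-r_1)$ preserves $\Omega_{id}$ and fixes both $U_m$ and $I_m$ (this is where $m_1=m_3$ is used); since by Lemma~\ref{Moulton-analitico} there is a single critical configuration in $\Omega_{id}$ with $I_m=1$, it must be a fixed point of $\Phi$, hence of the form $(-a,0,a)$. Imposing $I_m=1$ forces $a=2^{-1/2}$, whence $U_m=2\mu/a+1/(2a)=(4\mu+1)/\sqrt2$, so that
\[
\calM_3(id,(1,\mu,1))=\frac{4\mu+1}{\sqrt2}\qquad\text{for all }\mu>0.
\]

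The remaining point, and the only one that is not essentially tautological, is an upper bound for $\calM_3(id,(1,1,\mu))$. Since $U_m I_m^{1/2}$ is homogeneous of degree $0$, $\calM_3(id,m)$ is its minimum over $\Omega_{id}$; moreover, translating a configuration to put its center of mass at the origin does not change $U_m$ while it minimizes $I_m$, so $\calM_3(id,(1,1,\mu))\le U_m(x)\,I_m(x)^{1/2}$ for any $x\in\Omega_{id}$ centered at the origin. Taking $x$ to be the configuration of masses $(1,1,\mu)$ with equal consecutive gaps yields $U_m(x)=(2+3\mu)/2$ and $I_m(x)=(1+5\mu)/(2+\mu)$, hence
\[
\calM_3(id,(1,1,\mu))\le\frac{2+3\mu}{2}\left(\frac{1+5\mu}{2+\mu}\right)^{1/2}.
\]
Comparing with the exact value above, the desired strict inequality $\calM_3(id,(1,1,\mu))<\calM_3(id,(1,\mu,1))$ reduces, after squaring and clearing denominators, to $(2+3\mu)^2(1+5\mu)<2(4\mu+1)^2(2+\mu)$; the difference of the right and left sides factors as $\mu(1-\mu)(13\mu+2)$, which is positive exactly for $0<\mu<1$. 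Therefore any $\mu\in(0,1)$ (for instance $\mu=1/2$) meets all the requirements of the lemma, the equality being the identity established in the first step. The one step that needs care is the passage from the definition of $\calM_3$ as a constrained minimum of $U_m$ to the free evaluation of $U_m I_m^{1/2}$ at a test configuration; this is exactly the normalization device recalled before Lemma~\ref{Moulton-analitico}, so it is not a real obstacle.
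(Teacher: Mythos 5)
Your proof is correct, and for the crucial inequality it takes a genuinely different route from the paper. The two preliminary steps coincide with the paper's: the identity $\calM_3((2,3),(1,\mu,1))=\calM_3(id,(1,1,\mu))$ is the relabelling observation (which the paper treats as immediate), and the value $\calM_3(id,(1,\mu,1))=(4\mu+1)/\sqrt{2}$ is obtained in both arguments from the reflection symmetry forced by $m_1=m_3$. The divergence is in how the two quantities are separated. The paper engineers an exactly computable asymmetric configuration: it demands that $s=2$ be a root of Euler's quintic for the masses $(1,1,\mu)$, solves the resulting linear equation for $\mu$, and then evaluates both minima in closed form at that single mass value. You instead keep the exact value only on the symmetric side and bound the asymmetric side from above by evaluating the degree-zero homogeneous function $U_m I_m^{1/2}$ at an equally spaced, centered test configuration; since $\calM_3(id,m)$ is the infimum of that function over $\Omega_{id}$, any test configuration gives a valid upper bound, and the comparison reduces to the polynomial inequality $\mu(1-\mu)(13\mu+2)>0$. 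I have checked your computations ($U=(2+3\mu)/2$, $I_G=(1+5\mu)/(2+\mu)$, and the factorization of the difference of the squared quantities), and they are right, so every $\mu\in(0,1)$ works. What your approach buys is robustness: it yields an open interval of admissible $\mu$ rather than one value, and it avoids solving or even quoting Euler's quintic --- a real advantage here, since the paper's evaluation $p(2)=19\mu-171$ appears to contain a small arithmetic slip (the quintic as printed gives $p(2)=19\mu-167$, hence $\mu=167/19$ rather than $9$), a slip to which your argument is immune. What you give up is only the explicit closed form of the asymmetric critical value, which the lemma does not require.
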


\begin{proof}
Let us first compute $\calM_3(id,(1,\mu,1))$.
The symmetry of the mass vector
implies that the central configurations in $\Omega_{id}$
are also symmetric. This means that
the configurations have the form $x_r=(-r,0,r)$
with $r>0$. Computing the potential function
and the moment of inertia we get
\[I(x_r)=2\,r^2\;\;\textrm{ and }\;\;
U(x_r)=\frac{1}{2r}+\frac{2\,\mu}{r}\,.\]
So the normal central configuration for this order
of the masses is $(-1/\sqrt{2},0,1/\sqrt{2})$.
We deduce that
\[\calM_3(id,(1,\mu,1))=\frac{\sqrt{2}}{2}+2\sqrt{2}\,\mu\,.\]

The second distribution of masses is not symmetric.
However, Euler has showed (see \cite{Eu}, or \cite{AF}
for a modern reference) that
up to a translation and rescale, a
central configuration for the mass vector $(m_1,m_2,m_3)$ and
order $\sigma=id$ is $(0,1,1+s)$, where $s$ is the unique
positive root of the polynomial
\begin{eqnarray*}
p(s)& = &-(m_1+m_2)s^5-(3\,m_1+2\,m_2)s^4-(3\,m_1+m_2)s^3+\\
&&\\
&&+(m_2+3\,m_3)s^2+(2\,m_2+3\,m_3)s+(m_2+m_3)\,.
\end{eqnarray*}
Since in our case we have $m_1=m_2=1$ and $m_3=\mu$ the polynomial
becomes
\[p(s)=-2\,s^5-5\,s^4-4\,s^3+
(1+3\,\mu)s^2+(2+3\,\mu)s+(1+\mu)\,.\]
We claim that there is $\mu>0$ for which $(0,1,3)$ is a
translated central configuration. Therefore $s=2$ must
be a root of this polynomial,
which gives rise to the linear equation
\[p(2)=19\mu-171=0\] whose solution is $\mu=9$.
We conclude that, the central
configurations, for the mass vector $(1,1,9)$ and
the ordering given by $\sigma=id$, have the form $y_r=(0,r,3r)$
with $r>0$. Using the Leibnitz
formula for the moment of inertia with respect to
the center of mass we avoid to translate the configuration.
More precisely, we have
\begin{eqnarray*}
I_G(y_r)&=&\frac{1}{m_1+m_2+m_3}\,
\left(m_1\,m_2\,r_{12}^2+m_1\,m_3\,r_{13}^2+
m_2\,m_3\,r_{23}^2\right)\\
&&\\
&=&\frac{1}{11}\left(r^2+9(3r)^2+9(2r)^2\right)
=\frac{118}{11}\,r^2\,.
\end{eqnarray*}
In particular, the central configuration with moment
of inertia $I_G=1$ is, up to a translation, the
configuration $y_r$ for $r=(11/118)^{1/2}$. Now we
can compute the value of the potential function in
this configuration, and we get
\[U(y_r)=\frac{1}{r}+\frac{9}{2r}+\frac{9}{3r}=
\frac{51}{6}\left(\frac{118}{11}\right)^{1/2}\,.\]

Therefore the lemma is proved, since for $\mu=9$ we
have computed
\[\calM_3(id,(1,9,1))=\frac{37}{\sqrt{2}}\]
and
\[\calM_3((2,3),(1,9,1))=\calM_3(id,(1,1,9))=
\frac{51}{6}\left(\frac{118}{11}\right)^{1/2}\,.\]
\end{proof}

The last lemma we will need in the proof of the theorem
is purely combinatorial and characterizes the fact that two
permutations are not equal nor symmetric.
Let us introduce first simplifying notations. 
If $\sigma\in S_N$ is a given permutation, then
we will write $\bar\sigma$ to denote the permutation
corresponding to the inverse order. More precisely,
$\bar\sigma$ is defined by $\bar\sigma(k)=\sigma(N+1-k)$.
Moreover, given $\sigma\in S_N$ and numbers
$i,j,k\in\set{1,\dots,N}$, we will say that $\sigma(i)$
is between $\sigma(j)$ and $\sigma(k)$ if either
$\sigma(j)<\sigma(i)<\sigma(k)$ or
$\sigma(k)<\sigma(i)<\sigma(j)$.

\begin{lemma}\label{combinatorio}
If $\sigma $ and $\tau$ are two given permutations
then we have the following alternative:
either $\sigma=\tau$, $\sigma=\bar\tau$, or
there are three numbers $i,j,k\in\set{1,\dots,N}$
such that $\sigma(i)$ is between $\sigma(j)$
and $\sigma(k)$, but $\tau(i)$ is not between
$\tau(j)$ and $\tau(k)$.
\end{lemma}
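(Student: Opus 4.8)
The plan is to recognise Lemma~\ref{combinatorio} as the classical statement that a finite linear order is determined, up to reversal, by its betweenness relation, combined with a small counting remark that lets one get away with the one-sided hypothesis appearing in the statement.

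First I would restate the negation of the conclusion. For a permutation $\rho\in S_N$ write $B_\rho$ for the set of triples $\bigl(i;\{j,k\}\bigr)$, with $i,j,k$ pairwise distinct, for which $\rho(i)$ is between $\rho(j)$ and $\rho(k)$; this is well defined because the relation is symmetric in $j$ and $k$. If the first two alternatives of the lemma fail and the third one fails too, then every betweenness fact of $\sigma$ is also a betweenness fact of $\tau$, that is $B_\sigma\subseteq B_\tau$. The first step is to upgrade this inclusion to an equality: for each $3$-element subset $\{i,j,k\}$ of $\{1,\dots,N\}$ the three numbers $\sigma(i),\sigma(j),\sigma(k)$ are pairwise distinct, so exactly one of them lies (as an integer) between the other two; hence $B_\sigma$ contains exactly one triple with underlying set $\{i,j,k\}$, so $\abs{B_\sigma}=\binom{N}{3}=\abs{B_\tau}$ and $B_\sigma\subseteq B_\tau$ forces $B_\sigma=B_\tau$.

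It then remains to deduce $\sigma=\tau$ or $\sigma=\bar\tau$ from $B_\sigma=B_\tau$. The key observation is that $B_\rho$ already records which pairs of bodies are \emph{consecutive} in the ordering prescribed by $\rho$: a pair $\{j,k\}$ fails to be consecutive precisely when some body lies strictly between $j$ and $k$, i.e.\ precisely when $\{j,k\}$ occurs as the outer pair of some triple of $B_\rho$. Thus $B_\sigma=B_\tau$ implies that $\sigma$ and $\tau$ induce the same ``adjacency graph'' on $\{1,\dots,N\}$. But the adjacency graph of a linear order on $N\ge 2$ points is a simple path, and a simple path is realised by exactly two linear orders, namely its two orientations, which are the reverse of one another; hence the orders induced by $\sigma$ and $\tau$ coincide or are mutually reversed, which is exactly to say $\sigma=\tau$ or $\sigma=\bar\tau$. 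The cases $N\le 2$ are trivial, since then there is no triple $i,j,k$ at all and $\{\tau,\bar\tau\}$ already exhausts $S_N$.

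I do not expect a genuine obstacle here: the lemma is elementary and purely combinatorial. The two points needing a line of justification are the counting remark — that among three distinct integers exactly one is the median, which is what converts the one-sided hypothesis into the full equality $B_\sigma=B_\tau$ — and the assertion that a simple path on $N$ vertices underlies exactly two linear orders (start from either of its two endpoints and read off the path). Everything else is just unwinding the definitions of betweenness and of $\bar\sigma$, together with the evident fact that reversing the oriented line leaves all betweenness facts of the bodies unchanged.
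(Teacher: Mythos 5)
Your argument is correct and complete, but it follows a genuinely different route from the paper's. The paper observes that if the third alternative fails then $\sigma\circ\tau^{-1}$ preserves betweenness, asserts (``it is not difficult to see'') that such a bijection of $\set{1,\dots,N}$ is monotone, and concludes because the only monotone permutations are the identity and the inversion $\iota(a)=N+1-a$. You instead reconstruct the order from its betweenness relation: the counting remark that upgrades $B_\sigma\subseteq B_\tau$ to $B_\sigma=B_\tau$ (exactly one of three distinct integers is the median, so both sets have cardinality $\binom{N}{3}$), followed by the observation that $B_\rho$ determines the consecutive pairs, i.e.\ the pairs $\set{j,k}$ with $\abs{\rho(j)-\rho(k)}=1$, whose graph is a Hamiltonian path admitting exactly two readings. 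Both are standard proofs that a finite linear order is determined up to reversal by its betweenness relation; yours has the merit of actually supplying the details the paper elides (the paper's monotonicity step can alternatively be filled in by a short induction using the triples $1<n<n+1$), at the cost of being somewhat longer.

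One caveat, which applies verbatim to the paper's own proof and so does not put you at a disadvantage: since the betweenness hypothesis bears on the \emph{values} $\sigma(i)$, the ``reversed'' case that either argument actually produces is $\sigma=\iota\circ\tau$, that is $\sigma(k)=N+1-\tau(k)$, and not $\sigma=\tau\circ\iota=\bar\tau$ as $\bar\tau$ is defined in the paper ($\bar\tau(k)=\tau(N+1-k)$). These differ in general: for $N=3$ and $\tau=(2,1,3)$ in one-line notation one has $\iota\circ\tau=(2,3,1)$ but $\bar\tau=(3,1,2)$, and the pair $\sigma=(2,3,1)$, $\tau=(2,1,3)$ admits no witnessing triple although $\sigma\neq\tau$ and $\sigma\neq\bar\tau$. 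The discrepancy is harmless for the paper because the lemma is applied to the inverse permutations $\sigma^{-1},\tau^{-1}$, for which the two notions of reversal trade places; but your closing identification ``the orders are mutually reversed, which is exactly to say $\sigma=\bar\tau$'' inherits the same imprecision, and the reversed order you obtain from the path argument is $\iota\circ\tau$, not $\tau\circ\iota$.
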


\begin{proof}
Clearly, each one of the first two possibilities
in the triple alternative excludes the others.
Thus it suffices to show that if the third
possibility is not satisfied then one of the
two first must be true.

It is not difficult to see that if the third
possibility is not satisfied then $\sigma\circ\tau^{-1}$
is a monotone bijection. On the other hand, the only permutations on the
$n$ numbers $\set{1,\dots,n}$ which are monotone are the identity and the
inversion. Thus, we must have $\sigma=\tau$ or $\sigma=\bar\tau$.
\end{proof}

\begin{proof}[\textbf{Proof of theorem \ref{minMoulton}}]
Recall that for each $\sigma\in S_N$, we denote $x_\sigma(m)$
the unique central configuration with ordering given by $\sigma$
and normalized in the sense that $I(x_\sigma(m),m))=1$. Therefore,
the set of critical values of the function $\tilde U$ is exactly
\[V_c(m)=\set{U(x_\sigma(m),m)\mid \sigma \in S_N}.\]
Thus we know that the number of critical values is a lower
semicontinuous function of $m\in\R_+^N$, so in particular
it is a continuous function over the set of maxima
\[A=\set{m\in\R_+^N\textrm{ such that } \abs{V_c(m)}=N!/2}\]
from which we conclude that this set is open.

In what follows we prove that $A$ is dense in $\R_+^N$.
Let us define, for each pair of permutations $\sigma,\tau\in S_N$,
the set
\[M_{\sigma,\tau}=
\set{m\in\R_+^N\mid U(x_\sigma(m),m)\neq U(x_\tau(m),m)}.\]
As a consequence of the
analyticity property proved in lemma \ref{Moulton-analitico}
we know that each one of these sets is either
open and dense, or empty.
We will prove that the empty case happens only if
$\sigma=\tau$ or $\sigma=\bar\tau$.
The proof of this claim finish the proof, since
\[A=\bigcap_{(\sigma,\tau)\in\calF} M_{\sigma,\tau}\]
where $\calF$ is the set of pairs $(\sigma,\tau)$
of non symmetric permutations, i.e. such that
$\sigma\neq\tau$ and $\sigma\neq\bar\tau$.
In order to prove the claim, we assume by contradiction
that $\sigma$ and $\tau$ are non symmetric permutations
and that the set $M_{\sigma,\tau}$ is however empty.
Thus we have
$U(x_\tau(m),m)=U(x_\sigma(m),m)$ for all $m\in\R_+^N$.

On the other hand, since $\sigma\neq\tau$ and
$\sigma\neq\bar\tau$ by lemma \ref{combinatorio}
(applied to the inverse permutations $\sigma^{-1}$ and $\tau^{-1}$)
we can assume without loss of generality that there are numbers
$1\leq i<j<k\leq N$ such that
\[\sigma^{-1}(i)<\sigma^{-1}(j)<\sigma^{-1}(k)\,,\]
and
\[\tau^{-1}(i)<\tau^{-1}(k)<\tau^{-1}(j)\,.\]
We can also assume, renumbering the bodies if necessary,
$i=1$, $j=2$ and $k=3$.
Now consider for small $\epsilon>0$ the mass vector
$m_\epsilon=(1,\mu,1,\epsilon,\dots,\epsilon)$ given by
where $\mu$ is the value of the mass given by lemma \ref{L3BP}.
By lemma \ref{nondegenerate-local-minima} we have
\[\calM_N(\sigma,m_\epsilon)=U(x_\sigma(m_\epsilon),m_\epsilon)
=U(x_\tau(m_\epsilon),m_\epsilon)=\calM_N(\tau,m_\epsilon)\]
for all $\epsilon>0$. Moreover applying lemma \ref{aproximacion}
with $N=3$, $\sigma_0=id$ and $\tau_0=(2,3)$ we have
\[\lim_{\epsilon\to 0}\calM_N(\sigma,m_\epsilon)=\calM_3(id,(1,\mu,1))\]
and
\[\lim_{\epsilon\to 0}\calM_N(\tau,m_\epsilon)=\calM_3((2,3),(1,\mu,1))\,.\]
This is impossible since it contradicts lemma \ref{L3BP}.
\end{proof}

\emph{Acknowledgements.}
The authors would like to thank the anonymous referees and Professor
Alain Albouy for their suggestions and comments. Following the suggestions,
we have included several improvements in the manuscript.

\end{document}